\documentclass[12pt]{amsart}

\usepackage{amssymb,latexsym,amsmath,extarrows, mathrsfs, amsthm,color,amsrefs}
\usepackage{graphicx}
\usepackage{amsmath}

\usepackage[margin=1in, centering]{geometry}
\usepackage[bookmarksnumbered, colorlinks, plainpages]{hyperref}
\usepackage{hyperref} 
\hypersetup{
    colorlinks=true,       
    linkcolor=blue,          
    citecolor=magenta,        
    filecolor=magenta,      
    urlcolor=cyan           
}

\usepackage{mathtools}
\mathtoolsset{showonlyrefs,showmanualtags}

\newtheorem{theorem}{Theorem}[section]

\newtheorem{lemma}[theorem]{Lemma}

\newtheorem{remark}[theorem]{Remark}



\newcommand{\C}{\mathbb C}
\newcommand{\Z}{\mathbb Z}
\newcommand{\T}{\mathbb T}

\newcommand{\R}{\mathbb R}

\newcommand{\ol}{\overline}
\newcommand{\D}{\mathbb D}

\title[Quantum walk in electric fields]{Interval spectrum for electric quantum walk and related skew-shift CMV matrices}
\author{}
\begin{document}

\author[F.\ Yang]{Fan Yang}
\address{Department of Mathematics \\ Louisiana State University  \\  Baton Rouge, LA 70803, USA}
\email{yangf@lsu.edu}

\thanks{
}

\begin{abstract} 
We show that for a family of quantum walk models with electric fields, the spectrum is the unit circle for any irrational field. The result also holds for the associated CMV matrices defined by skew-shifts. Generalizations to CMV matrices with skew-shifts on higher dimensional torus are also obtained.
\end{abstract}

\maketitle

\section{Introduction}

The quantum walk model is a quantum mechanical analogue of the classical random walk. It can also be used to describe quantum simulation of a particle on a lattice or graph. 
This model has demonstrated a wide range of applications across various fields, including mathematical physics and quantum information, attracting significant attention over the past few decades, see e.g. \cite{VA12,Kon08,YG,SC, WM,ARS}. 

It was discovered by Cantero, Moral, Grünbaum and Velázquez \cite{CGMV}  that quantum walk models can be reduced to CMV matrices, a family of extensively studied unitary operators that have further connections to orthogonal polynomials on the unit circle (OPUC) \cite{SOPUC,5years}. This connection brings the tools that were largely developed in the spectral theory community into the study of quantum walks.

The study of spectrum structure is an important topic in spectral theory and mathematical physics. One class of operator that was studied widely is quasi-periodic Schr\"odinger operators. For example, the famous Ten Martini problem, named after Kac and Simon, conjectured Cantor spectrum for the almost Mathieu operators (quasi-periodic operator with cosine potential) for any irrational magnetic flux.
The complete proof of this conjecture was given by Avila-Jitomirskaya \cite{AJ1}, with earlier important advances by Bellissard-Simon \cite{BS}, Choi-Elliot-Yui \cite{CEY}, Last \cite{La}, Puig \cite{Puig} and Avila-Krikorian \cite{AK}.
Goldstein and Schlag proved Cantor spectrum for quasi-periodic operators with general analytic potentials defined on $\T$ in the positive Lyapunov exponent regime for a.e. magnetic flux \cite{GS}.
See \cite{DGY,Handry,BHJ,CFO,SSY,BBL,KPS2} for Cantor spectrum results for some other quasi-periodic models.
Though Cantor spectrum is generally expected for many 1D single-frequency quasi-periodic operators, there are some exceptions.
One such known example is the Maryland model, with a $\tan$ potential, for which the spectrum was proved to be the whole real line \cite{SimonMaryland}.
The monotonicity of the $\tan$ potential also leads to other unique features, e.g. the lack of phase resonance, see \cite{JL,JY,HJY}.
For multi-frequency or higher dimensional operators, interval spectrum is conjectured.
A significant advancement in the multi-frequency case was made by Goldstein-Schlag-Voda in \cite{GSV}, where they solved this problem up to the spectral edges for large coupling constants. 
For higher dimensional continuum quasi-periodic operators, a recent breakthrough of Karpeshina-Parnovski-Shterenberg \cite{KPS} established the Bethe-Sommerfeld conjecture for generic potentials, hence proving finiteness of the number of gaps.

In this short note, we show that a quantum walk model in electric field, studied actively by physicists in recent years \cite{CRWAGW,CWloc,GASWWMA, BHSVR, APP}, does not have spectral gaps. 
Our proof is elementary but the result provides a new example of non-Cantor spectrum for a 1D single-frequency quasi-periodic operators. 
Another interesting feature of the model is that it can be reduced to a CMV matrix defined by skew-shift. 
The skew-shift operators, see the dynamics in \eqref{def:sk}, is a different family of ergodic operators, hosting a number of major open problems. One of the conjectures is that skew-shift models do not have spectral gaps, see Chapter 15 of \cite{Bourgainbook}. 
The results in this note, Theorem \ref{thm:spec1} and \ref{thm:spec2}, provide examples for which this skew-shift conjecture is true.
See also \cite{BGS,B1,K1,K2,Hconti,HLS} for some related works concerning the skew-shift dynamics.

Next let us introduce the model and our results in details.
We study the discrete time quantum walk of one particle on $\Z$ with two dimensional internal degree of freedom with external electric fields.
Compare to quantum walk with external magnetic field, for which there is a vast literature of mathematical studies recently, see e.g. \cite{CFO,qw,AD}, quantum walk with electric field is less explored.

The quantum walk model acts on $\ell^2(\Z)\otimes \C^2=:\mathcal{H}$ with the following basis:
\begin{align*}
\delta_n^{\pm}=\delta_n\otimes e_{\pm},\ \  n\in \Z,
\end{align*} 
where $\{\delta_n\}$ is the standard basis of $\ell^2(\Z)$ and $\{e_+=(1,0)^T,\ \ e_-=(0,1)^T\}$ is the standard basis of $\C^2$.
Let $C\in U(2)$ be a coin matrix, defined by
\begin{align}\label{def:C0}
C=e^{2\pi i\eta}
\left(\begin{matrix}
a & b\\
-b^* &a^*
\end{matrix}\right),
\end{align}
in which $|a|^2+|b|^2=1$ and  $\eta \in \T$.

Let $S$ be a conditional shift operator, defined as
\begin{align}\label{def:S}
S\delta_n^{\pm}=\delta_{n\pm 1}^{\pm}, 
\text{ or } (S\psi)_n^{\pm}=\psi_{n\mp 1}^{\pm}, \text{ for } \psi \in \ell^2(\Z) \otimes \C^2,
\end{align}
The evolution of the quantum walk for each step is described by a unitary operator, which can be represented as a product $CS$, i.e. a shift-coin quantum walk.
Here we consider the following electric walk $W$ acting on $\mathcal{H}$,
\begin{align}\label{def:W}
W_{\omega,\theta,\eta,a,b}:=Q_{\omega,\theta,\eta,a,b} S,
\end{align}
where $\omega\in \T$ (the discrete electric field), $\theta\in \T$ and  $Q_{\omega,\theta,\eta,a,b}$ acts coordinate-wise via a matrix multiplication $Q_{\omega,\theta,\eta,a,b,n}$ defined by 
\begin{align}\label{def:Q}
Q_{\omega,\theta,\eta,a,b,n}(\psi_n^-, \psi_n^+)^T=e^{2\pi i(\theta+n\omega)}C\, (\psi_n^-, \psi_n^+)^T.
\end{align}
In the following, we shall identify $\ell^2(\Z)\otimes \C^2 \to \ell^2(\Z)$ via
$$\delta_n^+\mapsto \delta_{2n+1},\ \ \delta_n^-\mapsto \delta_{2n}.$$
Under such identification, $W_{\omega,\theta,\eta,a,b}$ can be written as a five diagonal matrix, see below.

\begin{equation}\label{def:qw}
 W_{\omega,\theta,\eta,a,b}=
\left(\begin{matrix}
&\cdots\\
&b e^{2\pi i (\theta+\eta+n\omega)}  &\underline{0}  &0  &a e^{2\pi i (\theta+\eta+n\omega)} & & \\
&a^* e^{2\pi i (\theta+\eta+n\omega)}  &0 &0 &-b^* e^{2\pi i (\theta+\eta+n\omega)} & & \\
& & &b e^{2\pi i (\theta+\eta+(n+1)\omega)} &0 &0 &a e^{2\pi i (\theta+\eta+(n+1)\omega)} &\\
& & &a^* e^{2\pi i (\theta+\eta+(n+1)\omega)} &0 &0 &-b^* e^{2\pi i (\theta+\eta+(n+1)\omega)} &\\
& & & & & &\cdots
\end{matrix}\right),   
\end{equation}
in which $\underline{0}$ indicates the $(2n,2n)$ position of the matrix.

As we mentioned, a remarkable observation that quantum walk operators are unitarily equivalent to CMV matrices was made in \cite{CGMV}. Now we introduce the definition of CMV matrices.

Let $\D$ be the unit disk in $\C$, $\partial \D$ be the unit circle and $\{\alpha_n\}_{n\in \Z}\subset \D$ be the {\it Verblunsky coefficients}.
Define 
\begin{align}\label{def:Theta}
\Theta_n:=\left(\begin{matrix}
\ol{\alpha}_n &\rho_n\\
\rho_n &-\alpha_n
\end{matrix}\right),
\end{align}
where \begin{align}
    \rho_n=\sqrt{1-|\alpha_n|^2}.
\end{align}
Let
\begin{align}\label{def:LM}
\mathcal{L}:=\bigoplus \Theta_{2n}, \text{ and } \mathcal{M}:=\bigoplus \Theta_{2n+1},
\end{align}
where $\Theta_{2n}$ acts on $\ell^2(\{2n, 2n+1\})$, and $\Theta_{2n+1}$ acts on $\ell^2(\{2n+1, 2n+2\})$.
CMV matrices are defined by
\begin{align}\label{def:CMV}
\mathcal{E}:=\mathcal{L}\mathcal{M},
\end{align} which are uniquely determined by the Verblunsky coefficients and $\mathcal{E}$ has the following expression:
\begin{align*}
\mathcal{E}=\begin{bmatrix}
		\ddots & \ddots & \ddots & \ddots &&&&  \\
		& \overline{\alpha_{2j}\rho_{2j-1}} & -\overline{\alpha_{2j}}\alpha_{2j-1} & \overline{\alpha_{2j+1}}\rho_{2j} & \rho_{2j+1}\rho_{2j} &&&  \\
		& \overline{\rho_{2j}\rho_{2j-1}} & -\overline{\rho_{2j}}\alpha_{2j-1} & {-\overline{\alpha_{2j+1}}\alpha_{2j}} & -\rho_{2j+1} \alpha_{2j} &&&  \\
		&&  & \overline{\alpha_{2j+2}\rho_{2j+1}} & -\overline{\alpha_{2j+2}}\alpha_{2j+1} & \overline{\alpha_{2j+3}} \rho_{2j+2} & \rho_{2j+3}\rho_{2j+2} & \\
		&& & \overline{\rho_{2j+2}\rho_{2j+1}} & -\overline{\rho_{2j+2}}\alpha_{2j+1} & -\overline{\alpha_{2j+3}}\alpha_{2j+2} & -\rho_{2j+3}\alpha_{2j+2} &    \\
		&& && \ddots & \ddots & \ddots & \ddots &
	\end{bmatrix}.
\end{align*}

This is indeed an extended CMV matrix, as it is a whole-line operator and original CMV matrices are half-line operator.

The equivalence between this particular electric quantum walk model $W_{\omega,\theta,\eta, a, b}$ and CMV matrix was computed explicitly in \cite{CWloc}, see below.

\begin{lemma}\label{lem:CMV_skew}\cite{CWloc}
There exists a diagonal unitary matrix $\Lambda=\mathrm{diag}(e^{2\pi i\lambda_j})$ with $\lambda_j\in \R$, such that
\begin{align*}
\Lambda^{-1}W_{\omega,\theta,\eta,a,b}\Lambda=\mathcal{E}_{\omega,\theta,\eta,|a|,|b|},
\end{align*}
such that $\mathcal{E}_{\omega,\theta,\eta,|a|,|b|}$ is a CMV matrix with the following Verblunsky coefficients:
\begin{align}\label{def:alpha_n}
\alpha_{2n}=|b| e^{-2\pi i((n^2-n)\omega+2n(\theta+\eta))},\, \rho_{2n}=|a|, \text{ and } \alpha_{2n+1}=0,\, \rho_{2n+1}=1 \ \ \text{ for any } n\in \Z.
\end{align}
\end{lemma}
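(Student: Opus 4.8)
The statement is a direct computation, and the key observation is that conjugation by a diagonal unitary $\Lambda=\mathrm{diag}(e^{2\pi i\lambda_j})$ sends the $(j,k)$ entry of $W_{\omega,\theta,\eta,a,b}$ to $e^{2\pi i(\lambda_k-\lambda_j)}(W_{\omega,\theta,\eta,a,b})_{jk}$, so it preserves all moduli and only rotates phases. The plan is therefore to (i) check that $W_{\omega,\theta,\eta,a,b}$ and $\mathcal{E}_{\omega,\theta,\eta,|a|,|b|}$ have the same pattern of moduli, and (ii) solve for the phases $\lambda_j$ that rotate one into the other. For (i) I would first simplify $\mathcal{E}=\mathcal{L}\mathcal{M}$ using the prescribed Verblunsky data \eqref{def:alpha_n}: since $\alpha_{2n+1}=0$ and $\rho_{2n+1}=1$, each odd block $\Theta_{2n+1}$ in \eqref{def:Theta} is the antidiagonal swap $\left(\begin{smallmatrix}0&1\\1&0\end{smallmatrix}\right)$, hence $\mathcal{M}$ is a pure permutation and $\mathcal{E}_{jk}=\mathcal{L}_{j,\sigma(k)}$, where $\sigma$ swaps $2n+1\leftrightarrow 2n+2$. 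Reading off $\mathcal{L}=\bigoplus\Theta_{2n}$ then gives, in block $n$, the four nonzero entries $|b|e^{2\pi i\phi_n},|a|,|a|,-|b|e^{-2\pi i\phi_n}$ at the positions $(2n,2n-1),(2n,2n+2),(2n+1,2n-1),(2n+1,2n+2)$, with $\phi_n=(n^2-n)\omega+2n(\theta+\eta)$. Comparing with \eqref{def:qw}, these occupy exactly the same positions and have the same moduli $|b|,|a|,|a|,|b|$ as the entries of the walk, so (i) reduces the lemma to matching phases.

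For (ii), write $a=|a|e^{2\pi i p}$ and $b=|b|e^{2\pi i q}$ and impose $e^{2\pi i(\lambda_k-\lambda_j)}(W_{\omega,\theta,\eta,a,b})_{jk}=\mathcal{E}_{jk}$ on the four nonzero entries of block $n$. This yields four congruences $\!\!\pmod 1$ in the $\lambda$'s; the two in column $2n-1$ and the two in column $2n+2$ let me eliminate the odd-indexed unknowns, producing $\lambda_{2n+1}-\lambda_{2n}=\phi_n-p-q$ together with the key first-order recursion $\lambda_{2n+2}-\lambda_{2n}=-p-(\theta+\eta)-n\omega$ on the even sublattice; one then checks that the two remaining equations hold automatically, so two of the four per-block relations are redundant. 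Because the right-hand side of this recursion is affine in $n$, its solution $\lambda_{2n}=\lambda_0-n(p+\theta+\eta)-\tfrac{n(n-1)}{2}\omega$ is quadratic in $n$, and this is exactly the mechanism that turns the linear field $n\omega$ of the walk into the skew-shift phase carried by $\alpha_{2n}$. The arguments $p,q$ of $a,b$ are absorbed into $\Lambda$, which is why the resulting Verblunsky coefficients depend only on $\theta+\eta$ and not on $\arg a,\arg b$.

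The only genuine content, and the step I expect to be the main obstacle, is the closing cocycle condition: the odd unknown $\lambda_{2n+1}$ is pinned down both by the equations of block $n$ and by those of block $n+1$, and these two determinations agree if and only if $\phi_{n+1}-\phi_n=2(\theta+\eta)+(2n+1)\omega$, that is, if and only if $\phi_n$ is quadratic in $n$ with exactly the stated leading term $n^2\omega$ (up to the harmless freedom of redefining $\theta$ by a constant multiple of $\omega$ and adding an affine gauge to the $\lambda_j$). Equivalently, existence of $\Lambda$ is the vanishing of the phase holonomy around each four-cycle of nonzero entries, and checking that this holonomy is trivial is precisely what forces the quadratic Verblunsky phase. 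I would finish by substituting the recursion solution back into the block-$(n+1)$ relations; the places demanding care are the alignment of the column indices of \eqref{def:qw} with those of $\mathcal{E}$ and the bookkeeping of the quadratic term, since that is where sign and off-by-$\omega$ errors naturally arise.
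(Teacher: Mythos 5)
Your strategy is the right one, and it is essentially the only technique available: the paper itself gives \emph{no} proof of Lemma \ref{lem:CMV_skew} (it is quoted from \cite{CWloc}), and the closest in-paper analogue, the unnumbered lemma in Section \ref{sec:spec2}, is proved by exactly your method of diagonal conjugation plus phase matching (there with the explicit choice $\lambda_{2j}=\psi_j/2$, $\lambda_{2j+1}=-\psi_j/2$). Your per-block bookkeeping is also correct: writing $a=|a|e^{2\pi ip}$, $b=|b|e^{2\pi iq}$, $\varphi_n=\theta+\eta+n\omega$, and $\alpha_{2n}=|b|e^{-2\pi i\phi_n}$, the four equations of block $n$ reduce to $\lambda_{2n+1}-\lambda_{2n}=\phi_n-p-q$ and $\lambda_{2n+2}-\lambda_{2n}=-p-\varphi_n$, with one (not two) of the four relations automatic.

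The gap is that you stop at precisely the step that decides the lemma, and that step, if carried out, \emph{fails} for the phase as printed in \eqref{def:alpha_n}. Your own holonomy criterion demands $\phi_{n+1}-\phi_n=\varphi_n+\varphi_{n+1}=2(\theta+\eta)+(2n+1)\omega$, whose solutions are $\phi_n=n^2\omega+2n(\theta+\eta)+c$; note this pins down the linear coefficient as well, so ``quadratic with leading term $n^2\omega$'' is not an equivalent reformulation. The stated phase $\phi_n=(n^2-n)\omega+2n(\theta+\eta)$ gives $\phi_{n+1}-\phi_n=2(\theta+\eta)+2n\omega$, so the cocycle condition fails by exactly $\omega$ in every block: concretely, with $\theta=\eta=0$ and $a,b$ real, block $0$ forces $\lambda_2=\lambda_1$ while block $1$ forces $\lambda_2=\lambda_1+\omega$, a contradiction for irrational $\omega$. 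The diagonal gauge freedom cannot repair this, since solving the full system shows the reachable Verblunsky phases are exactly $n^2\omega+2n(\theta+\eta)+c$ with only the constant $c$ adjustable (it is $\lambda_0-\lambda_1-p-q$). Your fallback of ``redefining $\theta$ by a multiple of $\omega$'' is the correct diagnosis, as $(n^2-n)\omega+2n(\theta+\eta)=n^2\omega+2n(\theta-\omega/2+\eta)$, but it is not harmless at the level of the claimed operator identity: by \eqref{eq:Wo+t} it replaces $W_{\omega,\theta,\eta,a,b}$ by $e^{-\pi i\omega}W_{\omega,\theta,\eta,a,b}$. So a completed version of your argument proves $\Lambda^{-1}W_{\omega,\theta,\eta,a,b}\Lambda=\mathcal{E}$ with $\alpha_{2n}=|b|e^{-2\pi i(n^2\omega+2n(\theta+\eta))}$, i.e.\ it proves the statement as printed only up to the scalar factor $e^{\pi i\omega}$ --- immaterial for Theorem \ref{thm:spec1} since $\partial\D$ is rotation invariant, but a genuine discrepancy in the identity itself. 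To close the proof you must actually perform the cross-block substitution and either record the corrected phase or carry the scalar factor explicitly; deferring it as ``the step I expect to be the main obstacle'' leaves the argument unfinished at its only nontrivial point.
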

\begin{align*}
\mathcal{E}_{\omega,\theta,\eta,|a|,|b|}=\left(\begin{matrix}
&\cdots\\
&|b|e^{-2\pi i \tilde{\psi}_j}  &0^{\star}  &0  &|a| & & \\
&|a| &0 &0 &-|b|e^{2\pi i \tilde{\psi}_j} & & \\
& & &|b|e^{-2\pi i \tilde{\psi}_{j+1}} &0 &0 &|a| \\
& & &|a| &0 &0 &-|b|e^{2\pi i \tilde{\psi}_{j+1}} \\
& & & & & &\cdots
\end{matrix}\right),
\end{align*}
in which $\tilde{\psi}_j=2j(\theta+\eta)+(j^2-j)\omega$.

\begin{remark}
Note that the appearance of $n^2\omega$ in the Verbluksky coefficients. This is one interesting feature of the electric field model, which differs from the quantum walk under magnetic field.
The $n^2\omega$ indicates the associated CMV matrix is generated by the skew-shift dynamics on $\T^2$, defined by:
\begin{align}\label{def:sk}
\, \, \, T_{2, \omega}(x_1,x_2)=(x_1+\omega, x_2+x_1), \text{ and } (T_{2, \omega})^n(x_1,x_2)=(x_1+n\omega, x_2+nx_1+{n\choose 2}\omega).
\end{align}
In fact, let $\mathcal{P}(x_1,x_2)=x_2$ be the projection: $\T^2 \rightarrow \T$ on the second component, then
\begin{align*}
\mathcal{P}[(T_{2, \omega})^n ( \theta+\eta, 0)]= n(\theta+\eta)+{n\choose 2}\omega
\end{align*}
and hence the coefficients $\alpha_{2n}$ in the associated CMV matrix $\mathcal{E}_{\omega,\theta,\eta,|a|,|b|}$ is given by 
\begin{align}
    \alpha_{2n}=|b|e^{-4\pi i \mathcal{P}[(T_{2, \omega})^n(\theta+\eta,0)]}.
\end{align}
\end{remark}

Let $\sigma(A)$ be the spectrum of an operator $A$. Our first result is the following:
\begin{theorem}\label{thm:spec1}
For any irrational $\omega$, we have
\begin{align*}
\sigma(W_{\omega,\theta,\eta,a,b})=\sigma(\mathcal{E}_{\omega,\theta,\eta,|a|,|b|})=\partial \D,
\end{align*}
where $\D$ is the unit disk in $\C$, for any $\theta \in \T,\eta \in \T$ and $a \in \C,b \in \C$ such that $|a|^2+|b|^2=1$.
\end{theorem}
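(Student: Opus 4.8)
The plan is to combine two facts: that the spectrum does not depend on the phase $\theta+\eta$, and that shifting this phase rotates the whole operator by a scalar. Since $W_{\omega,\theta,\eta,a,b}$ is unitary, its spectrum is a nonempty closed subset of $\partial\D$; so the task reduces to showing this subset is all of $\partial\D$, which I will obtain from a rotation-invariance argument.

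First I would record the rotation covariance. Inspecting \eqref{def:W}--\eqref{def:C0}, the parameters $\theta$ and $\eta$ enter $Q_{\omega,\theta,\eta,a,b}$ only through the common scalar $e^{2\pi i(\theta+\eta)}$: writing $B=\bigl(\begin{smallmatrix} a & b \\ -b^* & a^*\end{smallmatrix}\bigr)$, we have $Q_{\omega,\theta,\eta,a,b}=e^{2\pi i(\theta+\eta)}\tilde{Q}$, where $\tilde{Q}$ acts coordinatewise at site $n$ by $e^{2\pi i n\omega}B$ and is independent of $\theta,\eta$. Since $S$ is also independent of these parameters, $W_{\omega,\theta,\eta,a,b}=e^{2\pi i(\theta+\eta)}\tilde{Q}S$. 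Consequently, for every $t\in\R$,
\begin{equation*}
W_{\omega,\theta+t,\eta,a,b}=e^{2\pi i t}\,W_{\omega,\theta,\eta,a,b},\qquad\text{hence}\qquad \sigma(W_{\omega,\theta+t,\eta,a,b})=e^{2\pi i t}\,\sigma(W_{\omega,\theta,\eta,a,b}).
\end{equation*}
This is the elementary observation: the spectrum rotates rigidly as the phase varies.

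The second, and main, ingredient is that $\sigma(W_{\omega,\theta,\eta,a,b})$ is in fact \emph{independent} of $\theta+\eta$. Here I would pass to the CMV side via Lemma \ref{lem:CMV_skew}, so that $\sigma(W_{\omega,\theta,\eta,a,b})=\sigma(\mathcal{E}_{\omega,\theta,\eta,|a|,|b|})$, and view $\mathcal{E}_{\omega,\theta,\eta,|a|,|b|}$ as one member of the continuous family of extended CMV matrices whose Verblunsky coefficients are sampled along orbits of the skew-shift $T_{2,\omega}$ of \eqref{def:sk}: by the remark following Lemma \ref{lem:CMV_skew} this operator corresponds to the base point $(\theta+\eta,0)\in\T^2$, and translating the extended CMV matrix by two sites implements $T_{2,\omega}$ on the base point. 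For irrational $\omega$ the skew-shift is minimal on $\T^2$, and the standard argument for ergodic operators over a minimal base (upper semicontinuity of the spectrum together with its constancy along a dense orbit) shows that the spectrum is the same for every base point, in particular for every $(\theta+\eta,0)$. Thus $\sigma(\mathcal{E}_{\omega,\theta,\eta,|a|,|b|})$, and hence $\sigma(W_{\omega,\theta,\eta,a,b})$, does not depend on $\theta+\eta$.

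Combining the two ingredients finishes the proof: fixing a reference phase and using the independence, the displayed identity yields $\sigma(W_{\omega,\theta,\eta,a,b})=e^{2\pi i t}\,\sigma(W_{\omega,\theta,\eta,a,b})$ for all $t\in\R$, so the spectrum is a nonempty closed subset of $\partial\D$ invariant under every rotation, forcing it to equal $\partial\D$; the same equality transfers to $\mathcal{E}_{\omega,\theta,\eta,|a|,|b|}$ by Lemma \ref{lem:CMV_skew}. I expect the only real work to lie in the second ingredient — verifying carefully that the two-site translation of the extended CMV matrix intertwines with $T_{2,\omega}$ on $\T^2$, and invoking the minimal-base constant-spectrum theorem in the unitary (CMV) setting — whereas the rotation covariance is immediate. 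I note that mere ergodicity (almost-everywhere constancy of the spectrum) would already suffice, since the covariance holds for \emph{all} phases and thereby pins the spectrum down to a rotation-invariant set.
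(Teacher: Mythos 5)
Your proof is correct, and its skeleton matches the paper's: rotation covariance of the spectrum in the phase, plus independence of the spectrum from the phase, force $\sigma(W_{\omega,\theta,\eta,a,b})$ to be a nonempty closed rotation-invariant subset of $\partial\D$, hence all of $\partial\D$. The difference is where the phase-independence comes from. The paper never passes to the CMV side for this step: conjugating $W_{\omega,\theta,\eta,a,b}$ by the two-site translation replaces $\theta$ by $\theta+\omega$, and since $\theta\mapsto W_{\omega,\theta,\eta,a,b}$ is norm-continuous (indeed $W_{\omega,\theta,\eta,a,b}=e^{2\pi i\theta}W_{\omega,0,\eta,a,b}$), minimality of the irrational rotation on $\T$ already gives \eqref{eq:sigW2}; Lemma \ref{lem:CMV_skew} is used only to transfer the conclusion to $\mathcal{E}_{\omega,\theta,\eta,|a|,|b|}$. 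You instead obtain phase-independence from minimality of the skew-shift $T_{2,\omega}$ on $\T^2$ \cite{DS} together with the constant-spectrum theorem for covariant families over minimal bases \cite{BIST,CFKS,BLLS}. This is valid (the two-site shift does implement $T_{2,\omega}$ on the base point, and the sampling is continuous, so the strong-resolvent version of the constancy argument applies), but it is heavier machinery than the one-dimensional rotation the paper uses; what it buys is that it is exactly the mechanism of the paper's proof of the higher-dimensional Theorem \ref{thm:spec2}, so your argument generalizes with little change. One caveat on your closing remark: ergodicity (a.e.\ constancy) suffices only if the a.e.\ statement is in the phase $\theta\in\T$ for the rotation family; for the skew-shift family on $\T^2$ that your main argument uses, the base points $(\theta+\eta,0)$ form a Lebesgue-null circle, so a.e.\ constancy over $\T^2$ says nothing about them unless you also note that the spectrum is independent of the second coordinate (a gauge/diagonal conjugation, cf.\ Section \ref{sec:spec2}).
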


\begin{remark}
    Since $W$ and $\mathcal{E}$ are unitary operators, it is obvious that their spectra are contained in $\partial \D$, like other quantum walk models. Thus one only needs to prove the inverse direction, which is presented in Section \ref{sec:spec1}. It is also a known fact that 
    \begin{align}
        \sigma(W_{\omega,\theta,\eta,a,b})=e^{2\pi i\theta}\cdot \sigma(W_{\omega,0,\eta,a,b}).
    \end{align} 
    But we could not find the description of $\sigma(W_{\omega,\theta,\eta,a,b})$ in the literature, this is one of the motivations why we write up this note. The proof of Theorem \ref{thm:spec1} is elementary.
\end{remark}

Indeed, the same result also holds for a more general family of CMV matrices defined by iterated (higher-dimensional) skew shift on $\T^{d}$, $d\geq 2$ as follows. 
Let $T_{d, \omega}$ be the iterated skew-shift on $\T^{d}$ defined as
\begin{align}\label{def:Skew}
T_{d, \omega}(\vec{x})=T_{d,\omega}(x_1,x_2,...,x_d)=(x_1+\omega, x_2+x_1,..., x_d+x_{d-1}).
\end{align}

Let $\mathcal{P}_{d}: \T^{d}\mapsto \T$ be the projection onto the $d$-coordinate:
\begin{align}\label{def:Pd}
\mathcal{P}_{d}(x_1,x_2,...,x_{d})=x_{d}.
\end{align}
Let $a,b\in [0,1]$ satisfying $a^2+b^2=1$.
For each $n\in \Z$, $\tilde{\alpha}_n$ is defined as
\begin{align}\label{def:alphan_2}
\alpha_{2n}(\vec{x}):=b e^{2\pi i\mathcal{P}_{d}(T_{d, \omega}^n(\vec{x}))}, \rho_{2n}\equiv a, \text{ and } \alpha_{2n+1}\equiv 0, \rho_{2n+1}\equiv 1.
\end{align}
Let $\mathcal{E}^d_{\omega,a,b}(\vec{x})$ be the associated CMV matrix determined by $\tilde{\alpha}_n$ and $\rho_n$ as its Verblunsky coefficients. 
It is known that for irrational $\omega$, $T_{d, \omega}$ is minimal \cite{DS}, and hence 
the spectrum $\sigma(\mathcal{E}^d_{\omega,a,b}(\vec{x}))=:\sigma(\mathcal{E}^d_{\omega,a,b})$ is independent of $\vec{x}$, see e.g. \cite{BIST,CFKS,BLLS}.
For this model, we have:

\begin{theorem}\label{thm:spec2}
For any irrational $\omega$, and $a,b\in [0,1]$ such that $a^2+b^2=1$. we have
\begin{align*}
\sigma(\mathcal{E}^d_{\omega,a,b})=\partial \D.
\end{align*}
\end{theorem}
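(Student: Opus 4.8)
The plan is to reduce the statement to the non--uniform hyperbolicity of the transfer--matrix cocycle and then to exhibit a topological (winding) obstruction to such hyperbolicity. It is standard that for an ergodic family of CMV matrices the resolvent set coincides with the set of $z$ for which the associated Szeg\H{o} cocycle is uniformly hyperbolic; since the skew--shift base $(\T^d,T_{d,\omega})$ is minimal for irrational $\omega$, the spectrum does not depend on the base point, so it suffices to prove that the cocycle fails to be uniformly hyperbolic for \emph{every} $z\in\partial\D$. Writing $J=\mathrm{diag}(i,-i)$ and $R_0=\tfrac1a\left(\begin{smallmatrix}z&-b\\-b&\bar z\end{smallmatrix}\right)\in SU(1,1)$, a direct computation from the Verblunsky data in \eqref{def:alphan_2} shows that, up to a scalar of modulus one that does not affect growth, the one--cell transfer cocycle over $T_{d,\omega}$ is
\begin{align*}
A_z(\vec x)=\exp(-\pi x_d J)\,R_0\,\exp(\pi x_d J),
\end{align*}
which depends on $\vec x$ only through $x_d=\mathcal P_d(\vec x)$ and is honestly $1$--periodic in $x_d$; in particular its winding in $x_d$ is trivial, so no obstruction is visible at this stage.

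The key step is a gauge transformation. Conjugating by the (anti--periodic) diagonal gauge $C(\vec x)=\exp(\pi x_d J)$ and using $(T_{d,\omega}\vec x)_d-x_d=x_{d-1}$, one finds
\begin{align*}
\hat A_z(\vec x):=C(T_{d,\omega}\vec x)\,A_z(\vec x)\,C(\vec x)^{-1}=\exp(\pi x_{d-1}J)\,R_0 .
\end{align*}
Thus $\hat A_z$ depends only on $(x_1,\dots,x_{d-1})$, so it descends through the factor map $\T^d\to\T^{d-1}$ (which intertwines $T_{d,\omega}$ and $T_{d-1,\omega}$) to a cocycle over $(\T^{d-1},T_{d-1,\omega})$, and uniform hyperbolicity is preserved under this descent because the cocycle products depend only on the projected point. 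The whole point of the gauge is that, although $C$ is defined only modulo $\pm I$, the topologically trivial $x_d$--dependence of $A_z$ has been traded for an \emph{odd} winding of $\hat A_z$ in the \emph{top} coordinate $x_{d-1}$ of $\T^{d-1}$: on the boundary circle one has the clean identity $\hat A_z(\vec x)\cdot w=e^{2\pi i x_{d-1}}\,(R_0\cdot w)$, so the Möbius action of $\hat A_z$ winds exactly once as $x_{d-1}$ runs around $\T$.

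Finally I would run a degree argument on $\T^{d-1}$. If the descended cocycle were uniformly hyperbolic, its unstable direction would provide a continuous section $q:\T^{d-1}\to\partial\D$ with $\hat A_z(\vec x)\cdot q(\vec x)=q(T_{d-1,\omega}\vec x)$ for the boundary action. Restricting to the loop on which only $x_{d-1}$ varies, the right--hand side has degree $\deg q$, since $T_{d-1,\omega}$ moves only the top coordinate and there is no higher coordinate into which winding could leak; the left--hand side, by the factorization above, has degree $1+\deg q$. This forces $1=0$, a contradiction, so the cocycle is uniformly hyperbolic for no $z$, whence $\sigma(\mathcal E^d_{\omega,a,b})=\partial\D$. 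The crux, and the step I expect to be most delicate, is exactly this interplay: the naive winding of the physical cocycle in $x_d$ vanishes, and the half--integer winding becomes an honest obstruction only after the anti--periodic gauge change and the descent to $\T^{d-1}$, which removes the coordinate $x_d$ that would otherwise absorb it. Care is also needed with the $PSU(1,1)$ bookkeeping for the anti--periodic gauge, and with the degenerate endpoints $b=0$ (the free CMV matrix, for which $\partial\D$ is immediate) and $a=0$ (excluded, since then the Verblunsky coefficients leave the open disk and $\mathcal E^d_{\omega,a,b}$ decouples into finite blocks).
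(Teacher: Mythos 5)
Your core argument is correct and takes a genuinely different route from the paper, although both proofs pivot on the same structural discovery: the half-phase gauge (your anti-periodic $C(\vec x)=\exp(\pi x_d J)$, the paper's diagonal $D=\mathrm{diag}(e^{\pm \pi i \psi_j})$) that trades the $x_d$-dependence for an $x_{d-1}$-dependent phase carrying a factor $\tfrac12$. After that common step the two proofs diverge. The paper stays at the operator level: shifting $x_{d-1}$ by $\tau$ changes every $\beta_j$ by $-\tau/2$, hence multiplies the gauged operator by the \emph{scalar} $e^{-\pi i \tau}$ and rotates its spectrum; minimality of $T_{d,\omega}$ makes the spectrum independent of the base point, so the spectrum is invariant under every rotation of $\partial\D$ and must be all of $\partial\D$. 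This is completely elementary and uses no cocycle theory. You instead push the same winding through the dynamical formalism: Johnson's theorem for extended CMV matrices over minimal dynamics (resolvent set $=$ uniform hyperbolicity locus), continuity of the unstable section of a uniformly hyperbolic $SU(1,1)$-cocycle (which indeed must take values in $\partial\D$, by invariance of the indefinite form), descent to $\T^{d-1}$, and the degree count $1+\deg q=\deg q$. Your computations check out: the one-cell cocycle is $\exp(-\pi x_d J)R_0\exp(\pi x_d J)$ up to a unimodular scalar, the gauged cocycle is $\exp(\pi x_{d-1}J)R_0$, the projection intertwines $T_{d,\omega}$ with $T_{d-1,\omega}$, and $T_{d-1,\omega}$ fixes the top homology class $e_{d-1}$ of $\T^{d-1}$ (this last point is exactly why the descent is essential: on $\T^d$ the winding of a section in $x_{d-1}$ leaks into $e_d$, and no contradiction appears). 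What your route buys is a structural explanation --- failure of uniform hyperbolicity at every $z\in\partial\D$ via a topological obstruction, which is how such ``no gaps'' statements are usually understood --- at the price of two nontrivial black boxes that you should cite precisely, and which the paper's argument avoids entirely.

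There is one genuine flaw: the theorem as stated allows $a=0$, $b=1$, and you cannot declare this case ``excluded'' --- your transfer matrices are simply undefined there (the entries of $R_0$ carry $1/a$), so your proof as written does not cover a case the statement asserts. The fix is short: when $a=0$ one has $\rho_{2n}\equiv 0$, so $\mathcal{L}$ is diagonal and $\mathcal{E}^d_{\omega,0,1}(\vec x)$ decouples into $2\times 2$ unitary blocks on $\mathrm{span}\{\delta_{2n+1},\delta_{2n+2}\}$ of the form $\left(\begin{smallmatrix}0 & -\alpha_{2n}\\ \overline{\alpha_{2n+2}} & 0\end{smallmatrix}\right)$, with eigenvalues $\pm i e^{\pi i(\psi_n-\psi_{n+1})}$ where $\psi_n=\mathcal{P}_d(T_{d,\omega}^n(\vec x))$. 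By the paper's Pascal-triangle computation in Section~\ref{sec:spec2}, $\psi_{n+1}-\psi_n$ is a polynomial in $n$ whose leading coefficient is an irrational multiple of $\omega$, so by Weyl equidistribution these eigenvalues are dense in $\partial\D$; the spectrum of the direct sum is the closure of the union of the block spectra, hence equals $\partial\D$. (Your other endpoint, $b=0$, actually needs no special treatment in your scheme, since $R_0=\mathrm{diag}(z,\bar z)$ is a legitimate $SU(1,1)$ matrix and the degree argument applies verbatim.) With this patch, and with references supplied for the Johnson-type theorem and the invariant-section theory, your proof is complete; note that the paper's proof handles $a=0$ with no extra work precisely because it never divides by $a$.
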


The rest of the paper is organized as follows: Section 2 provides preliminaries for Theorem \ref{thm:spec2}. We present the short proof of Theorem \ref{thm:spec1} in Sec. \ref{sec:spec1} and the proof of Theorem \ref{thm:spec2} in Sec. \ref{sec:spec2}.

\section{Preliminaries}
\subsection{Some facts about combinatorial numbers}
\

For $k\geq 0$, the following is the combinatorial number, with the convention that $0!=1$,
\begin{align}\label{eq:comb_pos}
    {n \choose k}=
       \begin{cases}
       & \frac{n!}{k! (n-k)!}, \text{ if } n\geq k\geq 0,\\
       & 0, \text{ if } 0\leq n<k.
       \end{cases}
     \end{align} 
For $n\leq -1$ and $k \geq 1$, the following definition is standard:
\begin{align}\label{eq:comb_neg}
    {n\choose k}=\frac{n\cdots (n-k+1)}{k!}=(-1)^k \frac{(-n)\cdots (-n+k-1)}{k!}=(-1)^k {-n+k-1\choose k}.
\end{align}
The definition also applies when $n\leq -1$ and $k=0$, hence 
\begin{align}
    {n\choose 0}=(-1)^0 {-n-1 \choose 0}=1, \text{ if } n\leq -1.
\end{align}
With these notations, it is easy to check that for arbitrary $n\in \Z$ (including $n\leq -1$),
\begin{align}
    T_{d,\omega}^n(x_1,...,x_d)=(x_1+{n\choose 1}\omega, x_2+{n\choose 1}x_1+{n\choose 2}\omega,...,x_d+\sum_{k=1}^{d-1} {n\choose k} x_{d-k}+{n\choose d}\omega)
\end{align}
Recall that $\mathcal{P}_d(x_1,...,x_d)=x_d$ is the projection onto the $d$-th component. Clearly,

\begin{align}    \mathcal{P}_d(T_{d,\omega}^n(x_1,...,x_d))=x_d+\sum_{k=1}^{d-1} {n\choose k} x_{d-k}+{n\choose d}\omega.
\end{align}

The following is the well-known Pascal's triangle formula for $n\geq 0$ and $k\geq 1$.
\begin{align}\label{eq:pascal}
{n\choose k}+{n\choose k-1}={n+1\choose k}.
\end{align}
This formula also holds for $n\leq -1$, and is probably well-known too. We verify below for completeness. 

For $n=-1$ and $k\geq 1$, by \eqref{eq:comb_neg}
\begin{align}
    {-1\choose k}+{-1\choose k-1}=(-1)^k {k\choose k}+(-1)^{k-1}{k-1\choose k-1}=0={0\choose k},
\end{align}

For $n\leq -2$ and $k\geq 1$, by \eqref{eq:comb_neg} and \eqref{eq:pascal} for values $-n+k-1, -n+k-2\geq 0$,
\begin{align}
    {n\choose k}+{n\choose k-1}
    =&(-1)^k {-n+k-1\choose k}+(-1)^{k-1}{-n+k-2\choose k-1}\\
    =&(-1)^k {-n+k-2\choose k}={n+1\choose k}.
\end{align}
This completes the verification.

\section{Spectrum for the electric quantum walk model}\label{sec:spec1}
Proof of Theorem \ref{thm:spec1}. Clearly, for any $\theta\in \T$,
\begin{align}\label{eq:Wo+t}
W_{\omega,\theta,\eta,a,b}=e^{2\pi i\theta} W_{\omega, 0,\eta,a,b},
\end{align}
and hence 
\begin{align}\label{eq:sigW1}
    \sigma(W_{\omega,\theta,\eta,a,b})=e^{2\pi i\theta}\cdot \sigma(W_{\omega,0,\eta,a,b})
\end{align}
Due to minimality of the irrational rotation on $\T$ by $\omega$, we have
\begin{align}\label{eq:sigW2}
    \sigma(W_{\omega,\theta,\eta,a,b})=\sigma(W_{\omega,0,\eta,a,b}).
\end{align}
The claimed result follows by combining \eqref{eq:sigW1} with \eqref{eq:sigW2}.
\qed

\section{Iterated skew-shift on $\T^d$, $d\geq 2$}\label{sec:spec2}
The goal of this section is to prove Theorem \ref{thm:spec2}.
Recall the
\begin{align*}
\mathcal{E}^d_{\omega,a,b}(\vec{x})=\left(\begin{matrix}
&\cdots\\
&be^{-2\pi i \psi_j}  &\underline{0}  &0  &a & & \\
&a^* &0 &0 &-b^*e^{2\pi i \psi_j} & & \\
& & &be^{-2\pi i \psi_{j+1}} &0 &0 &a \\
& & &a^* &0 &0 &-b^*e^{2\pi i \psi_{j+1}} \\
& & & & & &\cdots
\end{matrix}\right),
\end{align*}
where $\underline{0}$ is located at the $(2j,2j)$ position of the matrix, with $\psi_j=\mathcal{P}_d(T_{d,\omega}^{j}(\vec{x}))$.

We will prove $\sigma(\mathcal{E}^d_{\omega,a,b}(\vec{x}))=\partial \D$ for any complex $a,b\in \C$ with $|a|^2+|b|^2=1$ (hence covering more general $a,b$'s than the extended CMV setting), and any vector $\vec{x}\in \T^d$, provided that $\omega$ is irrational.

We first show $\mathcal{E}^d_{\omega,a,b}(\vec{x})$ can be conjugated to a matrix $W_{\beta,a,b}$, of a similar form as \eqref{def:qw}, with certain choices of $\beta=(\beta_j)_{j\in \Z}$:
\begin{align}\label{def:W_beta}
W_{\beta,a,b}=
\left(\begin{matrix}
&\cdots\\
&b e^{2\pi i \beta_{j-1}}  &\underline{0}  &0  &a e^{2\pi i \beta_j} & & \\
&a^* e^{2\pi i \beta_{j-1}}  &0 &0 &-b^* e^{2\pi i \beta_j} & & \\
& & &b e^{2\pi i \beta_j} &0 &0 &a e^{2\pi i \beta_{j+1}} &\\
& & &a^* e^{2\pi i \beta_j} &0 &0 &-b^* e^{2\pi i \beta_{j+1}} &\\
& & & & & &\cdots
\end{matrix}\right),
\end{align}
\begin{lemma}
    There exists matrix $D=\mathrm{diag}(e^{2\pi i\lambda_j})$ such that 
    \begin{align}\label{eq:E_equiv_W_d}
    D\mathcal{E}^d_{\omega,a,b}(\vec{x})D^{-1}=W_{\beta(\vec{x}_-,\omega),a,b},
    \end{align}
    where
    \begin{align}\label{def:beta_j}
        \beta_j(\vec{x}_-,\omega):=-\frac{1}{2}\left(x_{d-1}+\sum_{k=2}^{d-1}{j\choose k-1}x_{d-k}+{j\choose d-1}\omega\right),
    \end{align}
    in which $\vec{x}_-:=(x_1,...,x_{d-1})$.
    In particular \eqref{eq:E_equiv_W_d} implies
    \begin{align}\label{eq:sigma_E=W}
        \sigma(\mathcal{E}^d_{\omega,a,b}(\vec{x}))=\sigma(W_{\beta(\vec{x}_-,\omega),a,b}).
    \end{align}
\end{lemma}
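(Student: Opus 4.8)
The plan is to determine the diagonal conjugator $D=\mathrm{diag}(e^{2\pi i\lambda_j})$ by directly matching entries, since conjugation by $D$ multiplies the $(m,n)$ entry of any matrix by $e^{2\pi i(\lambda_m-\lambda_n)}$. Reading off the nonzero entries of $\mathcal{E}^d_{\omega,a,b}(\vec{x})$ in rows $2j$ and $2j+1$ (the $b,a^*$ entries sitting in column $2j-1$ and the $a,-b^*$ entries in column $2j+2$) and matching them against the corresponding entries of $W_{\beta,a,b}$ in \eqref{def:W_beta}, I obtain four families of phase equations, understood modulo $1$:
\begin{align*}
\lambda_{2j}-\lambda_{2j-1}=\beta_{j-1}+\psi_j,\quad \lambda_{2j+1}-\lambda_{2j-1}=\beta_{j-1},\quad \lambda_{2j}-\lambda_{2j+2}=\beta_j,\quad \lambda_{2j+1}-\lambda_{2j+2}=\beta_j-\psi_j.
\end{align*}

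Next I would check that this over-determined system is consistent and solve it. Subtracting the second equation from the first (and likewise the fourth from the third) gives the single compatibility condition $\lambda_{2j}-\lambda_{2j+1}=\psi_j$ for every $j$, while comparing $\beta_j=\lambda_{2j}-\lambda_{2j+2}$ with its shifted counterpart forces the sum $\lambda_{2j}+\lambda_{2j+1}$ to be independent of $j$. Together these pin the $\lambda$'s down up to an irrelevant additive constant; choosing that constant to be $0$, I set
\begin{align*}
\lambda_{2j}=\tfrac12\psi_j,\qquad \lambda_{2j+1}=-\tfrac12\psi_j.
\end{align*}
A direct substitution then confirms all four equations, and from $\beta_j=\lambda_{2j}-\lambda_{2j+2}$ one reads off $\beta_j=\tfrac12(\psi_j-\psi_{j+1})$.

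It then remains to reconcile this with the closed form \eqref{def:beta_j}. Using $\psi_j=\mathcal{P}_d(T_{d,\omega}^{j}(\vec{x}))=x_d+\sum_{k=1}^{d-1}{j\choose k}x_{d-k}+{j\choose d}\omega$ from the preliminaries together with the Pascal identity ${j\choose k}-{j+1\choose k}=-{j\choose k-1}$, which holds for all $j\in\Z$ as verified around \eqref{eq:pascal}, the difference telescopes to
\begin{align*}
\psi_j-\psi_{j+1}=-\sum_{k=1}^{d-1}{j\choose k-1}x_{d-k}-{j\choose d-1}\omega.
\end{align*}
Isolating the $k=1$ term ${j\choose 0}x_{d-1}=x_{d-1}$ then matches \eqref{def:beta_j} exactly, establishing \eqref{eq:E_equiv_W_d}. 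Finally, because $D$ is unitary, conjugation preserves the spectrum, which gives \eqref{eq:sigma_E=W} at once.

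The main obstacle I anticipate is essentially bookkeeping: the phase system is over-determined (four relations per index, but only \emph{differences} of the $\lambda_j$ are at our disposal), so the genuine content lies in verifying the compatibility condition $\lambda_{2j}+\lambda_{2j+1}=\text{const}$ and in keeping the various sign and complex-conjugate conventions in the matrix entries consistent throughout. Once that consistency is secured, the explicit choice of $\lambda$'s and the Pascal simplification are routine.
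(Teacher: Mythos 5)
Your proposal is correct and follows essentially the same route as the paper: conjugation by a diagonal unitary with $\lambda_{2j}=\psi_j/2$, $\lambda_{2j+1}=-\psi_j/2$, followed by the Pascal identity (valid for all $j\in\Z$ as checked in the preliminaries) to put $\beta_j=\tfrac12(\psi_j-\psi_{j+1})$ into the closed form \eqref{def:beta_j}. The only difference is expository — you derive the choice of $\lambda$'s by solving the over-determined phase system, whereas the paper simply posits this ansatz and verifies it — so the two arguments are substantively identical.
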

\begin{proof}
   With $D=\mathrm{diag}(e^{2\pi i\lambda_j})$, we have
    \begin{align}
        (D\mathcal{E}^d_{\omega,a,b}(\vec{x})D^{-1})_{2j,2j-1}=&be^{-2\pi i\psi_j}\cdot e^{2\pi i\lambda_{2j}}\cdot e^{-2\pi i\lambda_{2j-1}}=be^{2\pi i(-\psi_j-\lambda_{2j-1}+\lambda_{2j})},\\
        (D\mathcal{E}^d_{\omega,a,b}(\vec{x})D^{-1})_{2j+1,2j-1}=&a^*e^{2\pi i\lambda_{2j+1}}\cdot e^{-2\pi i\lambda_{2j-1}}=a^*e^{2\pi i(\lambda_{2j+1}-\lambda_{2j-1})}\\
        (D\mathcal{E}^d_{\omega,a,b}(\vec{x})D^{-1})_{2j,2j+2}=&ae^{2\pi i\lambda_{2j}}\cdot e^{-2\pi i\lambda_{2j+2}}=ae^{2\pi i(\lambda_{2j}-\lambda_{2j+2})}\\
        (D\mathcal{E}^d_{\omega,a,b}(\vec{x})D^{-1})_{2j+1,2j+2}=&-b^*e^{2\pi i\psi_j}\cdot e^{2\pi i\lambda_{2j+1}}\cdot e^{-2\pi i\lambda_{2j+2}}=-b^*e^{2
        \pi i(\psi_j+\lambda_{2j+1}-\lambda_{2j+2})}.
    \end{align}
    Taking \begin{align}
        \lambda_{2j}=\psi_j/2, \text{ and } \lambda_{2j+1}=-\psi_j/2,
    \end{align} 
    we have
    \begin{align}\label{eq:psi-psi}
        \begin{cases}
            -\psi_j-\lambda_{2j-1}+\lambda_{2j}=-\frac{1}{2}(\psi_j-\psi_{j-1})\\
            \lambda_{2j+1}-\lambda_{2j-1}=-\frac{1}{2}(\psi_j-\psi_{j-1})\\
            \lambda_{2j}-\lambda_{2j+2}=-\frac{1}{2}(\psi_{j+1}-\psi_j)\\
            \psi_j+\lambda_{2j+1}-\lambda_{2j+2}=-\frac{1}{2}(\psi_{j+1}-\psi_j).
        \end{cases}
    \end{align}
    
    Direct computations using the Pascal's triangle (\ref{eq:pascal}) yields:
    \begin{align}\label{def:beta}
        \psi_j-\psi_{j-1}
        =&\sum_{k=1}^{d-1}\left({j\choose k}-{j-1\choose k}\right)x_{d-k}+\left({j\choose d}-{j-1\choose d}\right)\omega \notag\\
        =&\sum_{k=1}^{d-1}{j-1\choose k-1}x_{d-k}+{j-1\choose d-1}\omega \notag\\
        =&x_{d-1}+\sum_{k=2}^{d-1}{j-1\choose k-1}x_{d-k}+{j-1\choose d-1}\omega=-2\beta_{j-1}(\vec{x}_-,\omega),
    \end{align}
    in which we used ${m\choose 0}\equiv 1$ for any $m\in \Z$.
    Combining the above with \eqref{eq:psi-psi}, we have
    \begin{align}
        D\mathcal{E}^d_{\omega,a,b}(\vec{x})D^{-1}=W_{\beta(\vec{x}_-,\omega),a,b}.
    \end{align}
    This is as claimed.
\end{proof}
Now we are in place to prove Theorem \ref{thm:spec2}.
Note that for any $\tau\in \R$, by the minimality of the skew-shift $T_{d,\omega}$, the spectrum set is independent of $\vec{x}\in \T^d$, in particular,
\begin{align}\label{eq:sig_E=sig_E_tau}
    \sigma(\mathcal{E}^d_{\omega,a,b}(\vec{x}))=\sigma(\mathcal{E}^d_{\omega,a,b}(\vec{x}+(0,...,0,\tau,0)),
\end{align}
note $\tau$ is added onto $x_{d-1}$.
It is easy to check by \eqref{def:beta_j}, for any $j\in \Z$, denoting $\vec{x}_-+(0,...,0,\tau)=:\vec{x}_{-}^{(\tau)}$, that
\begin{align}\label{eq:beta_j_shift_tau}
    \beta_j(\vec{x}_{-}^{(\tau)}),\omega)=\beta_j(\vec{x}_-,\omega)-\frac{\tau}{2}.
\end{align}
Hence,  by \eqref{def:W_beta} and \eqref{eq:beta_j_shift_tau},
\begin{align}
    W_{\beta(\vec{x}_{-}^{(\tau)},\omega),a,b}=e^{-\pi i\tau} W_{\beta(\vec{x}_-,\omega),a,b},
\end{align}
which implies
\begin{align}
    \sigma(W_{\beta(\vec{x}_-^{(\tau)},\omega),a,b})=e^{-\pi i\tau} \cdot \sigma(W_{\beta(\vec{x}_-,\omega),a,b}).
\end{align}
Combining the above with \eqref{eq:sigma_E=W}, we have for any $\vec{x}\in \T^d$ that 
\begin{align}
    \sigma(\mathcal{E}^d_{\omega,a,b}(\vec{x}+(0,...,0,\tau,0))=e^{-\pi i\tau} \cdot \sigma(\mathcal{E}^d_{\omega,a,b}(\vec{x})).
\end{align}
This further combined with \eqref{eq:sig_E=sig_E_tau} implies
\begin{align}\label{eq:sig_E_invariant}
    \sigma(\mathcal{E}^d_{\omega,a,b}(\vec{x}))=e^{-\pi i\tau} \cdot \sigma(\mathcal{E}^d_{\omega,a,b}(\vec{x})), \text{ for any } \tau \in \R.
\end{align}
Since $|a|^2+|b|^2=1$, $\mathcal{E}_{\omega,a,b}(\vec{x})$ is unitary, which implies $\sigma(\mathcal{E}_{\omega,a,b}(\vec{x}))\subset \partial \D$. Combining this with \eqref{eq:sig_E_invariant} yields
\begin{align}
    \sigma(\mathcal{E}^d_{\omega,a,b}(\vec{x}))=\partial \D.
\end{align}
\qed

\bibliographystyle{amsplain}

\end{document}